\def\BibTeX{{\rm B\kern-.05em{\sc i\kern-.025em b}\kern-.08em
    T\kern-.1667em\lower.7ex\hbox{E}\kern-.125emX}}
\newcommand{\bv}[1]{\mathbf{#1}}
\newcommand{\E}{\mathbb{E}}
\newcommand{\sX}{\mathcal{X}}
\newcommand{\sU}{\mathcal{U}}
\newtheorem{theorem}{Theorem}[section]
\newtheorem{problem}[theorem]{Problem}
\newtheorem{proposition}[theorem]{Proposition}
\newtheorem{remark}[theorem]{Remark}
\begin{document}

%\title{Discrete fully probabilistic design: a tool to design control policies from examples}

\title{Discrete fully probabilistic design: towards a control pipeline for the synthesis of policies from examples}

\author{Enrico Ferrentino, Pasquale Chiacchio, Giovanni Russo
\thanks{All authors are with the Department of Computer and Electrical Engineering and Applied Mathematics (DIEM), University of Salerno, 84084 Fisciano, SA, Italy, e-mail: \{eferrentino,pchiacchio,giovarusso\}@unisa.it.}
}

%\author{\IEEEauthorblockN{Enrico Ferrentino}
%\IEEEauthorblockA{\textit{Department of Computer Engineering, Electrical Engineering and Applied Mathematics} \\
%\textit{University of Salerno}\\
%Fisciano, Italy \\
%eferrentino@unisa.it}
%\and
%\IEEEauthorblockN{Pasquale Chiacchio}
%\IEEEauthorblockA{\textit{Department of Computer Engineering, Electrical Engineering and Applied Mathematics} \\
%\textit{University of Salerno}\\
%Fisciano, Italy \\
%pchiacchio@unisa.it}
%\and
%\IEEEauthorblockN{Giovanni Russo}
%\IEEEauthorblockA{\textit{Department of Computer Engineering, Electrical Engineering and Applied Mathematics} \\
%\textit{University of Salerno}\\
%Fisciano, Italy \\
%giovarusso@unisa.it}
%}

\maketitle

\begin{abstract}
We present the principled design of a control pipeline for the synthesis of policies from examples data. The pipeline, based on a discretized design which we term as discrete fully probabilistic design, expounds an algorithm recently introduced in \cite{Gagliardi} to synthesize policies from examples for constrained, stochastic and nonlinear systems. Contrary to other approaches, the pipeline we present: (i) does not need the constraints to be  fulfilled in the possibly noisy example data; (ii) enables control synthesis even when the data are collected from an example system that is different from the one under control. The design is benchmarked numerically on an example that involves controlling an inverted pendulum with actuation constraints starting from data collected from a physically different pendulum that does not satisfy the system-specific actuation constraints. We also make our fully documented code openly available. 
%We present a discretized design that expounds an algorithm recently introduced in \cite{Gagliardi} to synthesize control policies from examples for constrained, possibly stochastic and nonlinear, systems. The constraints do not need to be  fulfilled in the possibly noisy example data, which in turn might be collected from a system that is different from the one under control. For this discretized design,  we discuss a number of  properties and give a design pipeline. The design, which we term as discrete fully probabilistic design, is benchmarked numerically on an example that involves controlling an inverted pendulum with actuation constraints starting from data collected from a physically different pendulum that does not satisfy the system-specific actuation constraints.
\end{abstract}

\begin{IEEEkeywords}
Control design pipeline, control from examples, data-driven control
\end{IEEEkeywords}

\section{Introduction}\label{sec:introduction}
Over the past few years, much research effort has been devoted to the problem of synthesizing control polices directly from data, bypassing the need to devise and identify a mathematical model in the form of difference/differential equations \cite{HOU20133}. An appealing {\em data-driven} control framework is that of designing controllers by using example data \cite{Han_lIU_zHU_Pas_19,Gagliardi,makdah2021learning,tu2021sample}. Within this {\em control from examples framework}, one seeks to synthesize policies so that the closed-loop system {\em tracks} some desired behavior extracted from the examples. In this context, a key challenge is that of developing an end-to-end pipeline to synthesize control policies for nonlinear, stochastic and constrained systems from noisy example data. Towards this aim, we present the principled design of a pipeline that enables control synthesis in these situations. With our pipeline, situations can be considered when data are collected from a system that is different from the one under control and does not satisfy the system-specific constraints, which might hence be unknown when the examples are collected. We refer to \cite{10.1145/3054912,doi:10.1146/annurev-control-100819-063206,cite-key,GARRABE202281} for detailed interdisciplinary surveys on the related topics of (imitation) learning and sequential decision making and we now briefly survey a number of works that are related to the specific methodological framework we leverage in this paper.

\subsubsection*{Related work} 
We leverage a Bayesian approach \cite{Peterka_V_Bayesian_Approach_to_sys_ident_1981} to dynamical systems that allows to represent the behaviors of these systems via probability functions. In the context of control, the approach has been leveraged in e.g.\ \cite{Karny96,Karny_M+Guy_TV_Sys&Ctr_lett_2006,KARNY2012105,Herzallah_R_JNeurNet_2015,10.1007/978-3-030-01713-2_20,Pegueroles_G+Russo_G_ECC19_confid} for the design of randomized control policies that enable tracking of a given target behavior. In these papers, the tracking problem is tackled by setting-up an unconstrained optimization problem. Closely related works, include {\cite{Todorov11478,NIPS2006_d806ca13,Kappen_2012}}. These works, by leveraging a similar framework, formalize the control problem as the (unconstrained) problem of minimizing a cost that captures the discrepancy between an ideal probability density function and the actual probability density function of the system under control.  An online version of these algorithms has been proposed in \cite{6716965}: in such a work, by leveraging an average cost formulation, the probability mass function for the state transitions is found.  {Finally, we also recall here  \cite{9244209,9446558}, where policies are obtained from the minimization of similar costs by leveraging multiple, specialized, datasets. These last papers, together with \cite{Gagliardi}, introduce constraints to the probabilistic formulation. Finally, we also recall \cite{9349120}, where a decision making architecture for Robust Model-Predictive Path Integral Control is proposed and this allows the introduction of the constraints (see also references therein).}

\subsubsection*{Statement of contributions and organization of the paper}
We introduce, and benchmark, the principled design of a pipeline for the control synthesis from examples data. The pipeline expounds an algorithm presented in \cite{Gagliardi} for the synthesis of control policies from examples. We term this discretized design {\em discrete fully probabilistic design} (DFPD). While \cite{Gagliardi} seeks to find an analytical solution to the control problem, here we rely on finding a purely numerical solution, exploiting convexity of the underlying optimization problems that, as we show, need to be solved in order to synthesize the policy. We also discuss a number of properties of DFPD. In contrast to other works on imitation learning and control synthesis from example, by exploiting \cite{Gagliardi}, our design: (i) does not need the constraints to be  fulfilled in the possibly noisy example data; (ii) enables control synthesis even when the data are collected from an example system that is different from the one under control. Finally, the design is numerically benchmarked on an inverted pendulum and we also make the code openly available and fully documented (see {\url{https://github.com/unisa-acg/discrete-fpd}}).

\noindent The paper is organized as follows. After giving the mathematical preliminaries and formalizing the statement of the problem (Section \ref{sec:problem_statement}), we present the DFPD. This is done in Section \ref{sec:algorithm}, where we also discuss a number of its properties. In Section \ref{sec:pipeline} we describe a design pipeline to use DFPD. The pipeline illustrates a process to determine, via DFPD, control inputs from the data. Finally, the effectiveness of DFPD is illustrated in Section \ref{sec:numerical_results} on a pendulum with actuation constraints. Concluding remarks are given  in Section \ref{sec:conclusions}.

\section{Mathematical Set-up and Problem Statement}
\label{sec:problem_statement}

We denote sets via {\em calligraphic} capital characters and vectors  in 
{\bf bold}. Multidimensional random
variables and their realizations are both denoted by lower-case bold letters. All the random variables we consider are discrete and sampled from probability mass functions (pmfs) with compact supports. Throughout the paper, we also refer to pmfs as (normalized) {\em histograms} and to its domain as (discrete) {\em alphabet}. The notation $\bv{z} \sim P(\bv{z})$ denotes the fact that the random variable $\bv{z}$ is sampled from the pmf $P(\bv{z})$. Given the set $\mathcal{Z} \subseteq \mathbb{R}^{n_z}$, its \emph{indicator function} is denoted by $\mathds{1}_{\mathcal{Z}}({\bf z})$, so that $\mathds{1}_{\mathcal{Z}}({\bf z}) = 1\, \forall {\bf z} \in \mathcal{Z}$ and $0$ otherwise. We also let $\E_P[\bv{h}(\bv{z})]:=\sum P(\bv{z})\bv{h}(\bv{z})$, where the sum is implicitly assumed to be taken over the support of $P(\bv{z})$,  be the expectation of a function, say $\bv{h}(\cdot)$, of $\bv{z}$.  The joint pmf of $\bv{z}_1$ and $\bv{z}_2$ is denoted by $P(\bv{z}_1,\bv{z}_2)$ and the conditional pmf of $\bv{z}_1$ given $\bv{z}_2$ is denoted by $P(\bv{z}_1|\bv{z}_2)$. The control problem considered in this paper is stated in terms of the Kullback-Leibler (or simply KL) divergence \cite{Kullback}. Given the histograms $P(\boldsymbol{\alpha})$ and $Q(\boldsymbol{\alpha})$ defined over the discrete alphabet $\mathcal{A}$, the KL-divergence is defined as
\begin{equation} \label{dkl_definition}
\mathcal{D}_{KL}(P || Q) := \sum_{\boldsymbol{\alpha} \in \mathcal{A}} P(\boldsymbol{\alpha}) \ln \frac{P(\boldsymbol{\alpha})}{Q(\boldsymbol{\alpha})}.
\end{equation}
%\GR{Introdurre funzione indicatore come in paper Gagliardi}
%\EF{Fatto}

\subsection{Formulation of the control problem}

Let ${\bf x} \in\sX\subset \mathbb{R}^{d_x}$ be the state of a dynamical system, and ${\bf u} \in\sU\subset \mathbb{R}^{d_u}$ be the control variable. The time variable is $t \in [0,T]$, $T<+\infty$. The time interval $[0,T]$ is discretized in $n+1$ instants with step $\Delta t >0$. This yields the sequence of time-instants $t(k) = k \Delta t$, with $k = 0, \ldots, n$ and $T = n \Delta t$. We also let ${\bf x}(k)$ and ${\bf u}(k)$ be ${\bf x}$ and ${\bf u}$ evaluated in the discretized time. We recall that $\sX$ and $\sU$ are both compact and we set
\begin{equation} \label{sets}
\mathcal{X} := \lbrace {\bf x}_0, \ldots, {\bf x}_{m-1} \rbrace, \ \ \ \mathcal{U} := \lbrace {\bf u}_0, \ldots, {\bf u}_{z-1} \rbrace ~\forall k.
\end{equation}
For notational convenience, we use the shorthand notation $\bv{x}_j(k)$ (resp.\ $\bv{u}_h(k)$) to denote that $\bv{x}$ (resp.\ $\bv{u}$) at time-instant $k$ equals the value of the $j$-th (resp.\ $h$-th) element in $\sX$ (resp.\ $\sU$).

Following \cite{Peterka_V_Bayesian_Approach_to_sys_ident_1981,Karny96} one can describe the behavior of a given system by computing the joint pmf $P^{n} := P\big({\bf x}(0), \ldots, {\bf x}(n), {\bf u}(0), \ldots, {\bf u}(n-1)\big)$, obtained from e.g.\ state-input data collected from the system. Likewise, one can also define a desired (or reference) behavior for the system in terms of a joint pmf, say $Q^{n}:=Q\big({\bf x}(0), \ldots, {\bf x}(n), {\bf u}(0), \ldots, {\bf u}(n-1)\big)$. Then, by making the standard Markov's assumption, the chain rule for pmfs yields the following convenience factorization:
\begin{equation} \label{conditional_probabilities}
\begin{split}
P^{n} = \prod_{k=1}^{n} & P \big( {\bf x}(k) | {\bf x}(k-1), {\bf u}(k-1)\big) \\ 
& P \big( {\bf u}(k-1) | {\bf x}(k-1) \big) P(\bv{x}(0)) \\
=: \prod_{k=1}^{n} & \tilde{P}_X^k \tilde{P}_U^{k-1} P(\bv{x}(0)) =: \prod_{k=1}^{n} \tilde{P}^k P(\bv{x}(0)),\\
Q^{n} = \prod_{k=1}^{n} & Q \big( {\bf x}(k) | {\bf x}(k-1), {\bf u}(k-1) \big) \\ 
& Q \big( {\bf u}(k-1) | {\bf x}(k-1) \big)Q(\bv{x}(0)) \\
=: \prod_{k=1}^n & \tilde{Q}_X^k \tilde{Q}_U^{k-1}Q(\bv{x}(0)) =:  \prod_{k=1}^{n} \tilde{Q}^kQ(\bv{x}(0)).
\end{split}
\end{equation}
%\GR{Chiamarente nella (3) pima non avevi le joint che pure avevi in precedenza definito giusto sopra l'equazione. Ho modificato l'equazione per correggere l'errore. Da controllare se era questo quello che intendevi. Se non e' cosi' ci sono cambiamenti pesanti da fare alla notazione.}
The former term in the first product of the definitions takes the name of \emph{state evolution model}, while the latter is the \emph{randomized control law} \cite{Karny06}. In the context of this paper, the pmf $Q^{n}$ is collected from example data. This allows us to state the problem of synthesizing control policies from example data for systems with actuation constraints as follows.

\begin{problem}[Global optimization problem] \label{prb:global_opt}
Find a solution to the optimization problem
\begin{equation} \label{generic_opt_problem}
\begin{split}
 \min_{\tilde{P}_U^{0}, \ldots, \tilde{P}_U^{n-1}} & \mathcal{D}_{KL}(P^n || Q^n) \\
\mbox{s.t.~} & {\bf f}_k(p^k_{hi}) \leq {\bf 0} ~\forall k \\
& {\bf g}_k(p^k_{hi}) = {\bf 0} ~\forall k
\end{split}
\end{equation}
where ${\bf f}_k(p^k_{hi})$, ${\bf g}_k(p^k_{hi})$ are constraint vector functions defining a convex feasibility domain and $p^k_{hi} := {P}\left({\bf u}_h(k)|{\bf x}_i(k)\right)$.
\end{problem}
For Problem \ref{prb:global_opt} we make the following observations. Minimizing the KL-divergence amounts at minimizing the discrepancy between $P^n$ and $Q^n$. These pmfs can be obtained by observing different systems and hence the formulation embeds the possibility of synthesizing policies for a given system by using examples collected on a different one. 

\section{The Discrete Fully Probabilistic Design Algorithm}\label{sec:algorithm}

The pseudo-code for the DFPD is given in Algorithm \ref{alg:discrete_fpd}. The DFPD takes as input the probabilistic descriptions $\tilde{P}_X^{k+1}\big({\bf x}(k+1)|{\bf x}_i, {\bf u}_h\big)$,  $\tilde{Q}_X^{k+1}\big({\bf x}(k+1)|{\bf x}_i, {\bf u}_h\big)$,  $\tilde{Q}_U^{k}\big({\bf u}(k)|{\bf x}_i\big)$ and the constraints of Problem \ref{prb:global_opt} (optional). DFPD then outputs the optimal solution to Problem \ref{prb:global_opt}.\footnote{As noted in \cite{Gagliardi}, given the convexity of the cost, if the constraints define a convex set, the existence of the solution is guaranteed.} The core of the procedure is a backward recursion that solves, at each iterate, the following

\begin{problem}[Local optimization problem] \label{prb:local_opt}
Find a solution, for each ${\bf x}_i \in \mathcal{X}$, to the following
\begin{equation} \label{opt_problem_constrained}
\begin{split}
\min_{p_{1i},\ldots,p_{zi}} & d\left({\bf x}_i\right) = p_{1i} \ln p_{1i} + p_{1i} (d_{1i}^x + r_{1i} - a_{1i}) + \ldots + \\
& + p_{zi} \ln p_{zi} + p_{zi} (d_{zi}^x + r_{zi} - a_{zi}) \\
\mbox{s.t.} & \sum_{h=1}^z p_{hi} = 1 \\
 & p_{hi} \geq 0 ~ \forall h \\
 & {\bf f}_k(p_{hi}) \leq {\bf 0} \\
 & {\bf g}_k(p_{hi}) = {\bf 0}
\end{split}
\end{equation}
where $p_{hi} = p_{hi}^k$ are the probabilities at the current iterate, $d^x_{hi} = \mathcal{D}_{KL}(\tilde{P}_X^{k+1} || \tilde{Q}_X^{k+1})$, $a_{hi} = \ln\left({Q}\left({\bf u}_h(k)|{\bf x}_i(k)\right)\right)$ and
\begin{equation} \label{recursive_constant}
r_{hi} = \sum_{j=0}^{m-1} P\big( {\bf x}_j (k+1) | {\bf x}_i(k), {\bf u}_h(k) \big) d \big( {\bf x}_j(k+1) \big).
\end{equation}
\end{problem}

At each step of the optimization horizon, and for each state ${\bf x}_i \in \mathcal{X}$, Algorithm \ref{alg:discrete_fpd} computes the scalars $d_{hi}$, $a_{hi}$ and $r_{hi}$, needed for the resolution of Problem \ref{prb:local_opt}. We note that, for the $k$-th step and the $i$-th state, the computation of $r_{hi}$ requires the recursion of the optimal cost $d({\bf x}(k+1))$ \emph{for all states}. Therefore, at $k$, each single cost $d({\bf x}_i(k))$ is temporarily stored in the auxiliary variable $c({\bf x}_i)$ in order to be reused at $k-1$. We now give the following
\begin{proposition}
Algorithm \ref{alg:discrete_fpd} returns the optimal solution to Problem \ref{prb:global_opt}.
\end{proposition}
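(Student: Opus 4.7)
The plan is to prove this via dynamic programming: expand $\mathcal{D}_{KL}(P^n || Q^n)$ along the Markov factorization \eqref{conditional_probabilities}, show that the resulting cost decomposes additively across stages, and conclude that the backward recursion of Algorithm \ref{alg:discrete_fpd} precisely implements Bellman's principle with Problem \ref{prb:local_opt} serving as the per-stage Bellman equation.

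To carry out step one, I would substitute the product forms of $P^n$ and $Q^n$ into \eqref{dkl_definition} and turn the logarithm of the product into a sum of log-ratios $\ln(\tilde{P}_X^k/\tilde{Q}_X^k) + \ln(\tilde{P}_U^{k-1}/\tilde{Q}_U^{k-1})$ plus a term depending only on $\mathbf{x}(0)$ (which is constant in the decision variables). Conditioning on $(\mathbf{x}_i(k-1), \mathbf{u}_h(k-1))$ and invoking the Markov property, the stage-$k$ contribution becomes an expectation over $\mathbf{x}(k-1)$ of $\mathcal{D}_{KL}(\tilde{P}_U^{k-1}(\cdot|\mathbf{x}_i) || \tilde{Q}_U^{k-1}(\cdot|\mathbf{x}_i)) + \sum_h p_{hi}\, d^x_{hi}$. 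Writing out the randomized-control KL and using $a_{hi} = \ln \tilde{Q}_U^{k}(\mathbf{u}_h|\mathbf{x}_i)$ reproduces the first two groups of terms in $d(\mathbf{x}_i)$ of \eqref{opt_problem_constrained}.

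I would then establish, by backward induction on $k$, that the quantity $d(\mathbf{x}_i)$ computed by the algorithm equals the optimal tail cost given $\mathbf{x}(k) = \mathbf{x}_i$. The base case $k = n$ has an empty tail. For the inductive step, conditioning on $(\mathbf{x}_i(k), \mathbf{u}_h(k))$ decouples the post-$k$ trajectory from the current decision, so averaging the previously computed tail costs $d(\mathbf{x}_j(k+1))$ against the state-evolution model yields precisely $r_{hi}$ as defined in \eqref{recursive_constant}. Minimising \eqref{opt_problem_constrained} for every $\mathbf{x}_i$ at every stage is therefore a valid Bellman recursion, and its minimiser gives an optimal solution to Problem \ref{prb:global_opt}.

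The main obstacle I anticipate is justifying the interchange of the outer expectation over $\mathbf{x}(k)$ with the per-state minimisation at each stage. This needs the decision variables to be indexed separately by $\mathbf{x}_i$, the constraints $\mathbf{f}_k$, $\mathbf{g}_k$ to couple only decisions at the same time step so that the feasible set factorises across $\mathbf{x}_i$, and the stage cost to be bounded below. Under these conditions — consistent with the setting of Problem \ref{prb:global_opt} — the minimum of the sum equals the sum of the minima, and convexity of each local problem (noted in the footnote) guarantees that the recursion returns a global optimum.
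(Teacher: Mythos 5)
Your proposal is correct and follows essentially the same route as the paper: the appendix likewise applies the chain rule for the KL-divergence to peel off the last stage, uses the fact that the inner divergence depends only on $\mathbf{x}(k)$ together with linearity of expectation to interchange the minimisation with the outer expectation, and iterates backward to obtain the value function $d(\mathbf{x}(k))$ with $r_{hi}$ as the expected tail cost, invoking convexity for existence of the optimum. The interchange-of-min-and-expectation step you flag as the main obstacle is exactly the step the paper justifies by noting that the decision variable of the inner optimisation does not depend on $P(\mathbf{x}(k))$.
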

\begin{proof}
The proof follows the technical derivations of \cite{Gagliardi} where an analytical solution to Problem \ref{alg:discrete_fpd} is given via a backward recursion and a primal-dual argument. The  difference is that Algorithm \ref{alg:discrete_fpd} numerically solves, at each $k$, a reformulation of the problem solved in the paper mentioned above. This reformulation is in fact Problem \ref{prb:local_opt} and, for completeness, a sketch of its derivation is given in Appendix \ref{apx:derivation}.
\end{proof}

\begin{algorithm}
\caption{Discrete fully probabilistic design algorithm}
\label{alg:discrete_fpd}
{\bf Inputs:} 
$\tilde{P}_X^{k+1}\big({\bf x}(k+1)|{\bf x}_i, {\bf u}_h\big)$,  $\tilde{Q}_X^{k+1}\big({\bf x}(k+1)|{\bf x}_i, {\bf u}_h\big)$, $\tilde{Q}_U^{k}\big({\bf u}(k)|{\bf x}_i\big)$, constraints of Problem \ref{prb:global_opt} (optional)\\
{\bf Output:} solution to Problem \ref{prb:global_opt}\\
{\bf Initialize:} $d({\bf x}_i) \gets 0~\forall i = 0, \ldots, m-1$ \\
{\bf Main loop:} \\
\For{$k \gets n-1$ to $0$}
{
\For{\textbf{each} ${\bf x}_i \in \mathcal{X}$}
{
$d^x_{hi} \gets \mathcal{D}_{KL}(\tilde{P}^{k+1}_X || \tilde{Q}^{k+1}_X)~\forall h$ \\
$a_{hi} \gets \ln \left( \tilde{Q}_U^{k}({\bf u}_h|{\bf x}_i) \right) \forall h$ \\
$r_{hi} \gets \sum_{x_j \in \mathcal{X}} \tilde{P}_X^{k+1}({\bf x}_j | {\bf x}_i, {\bf u}_h) d({\bf x}_j)~\forall h$ \\
Find $\tilde{P}_U^k$ by solving Problem \ref{prb:local_opt} with the coefficients above \\
Store optimal cost $c({\bf x}_i)$
}
$d({\bf x}_i) \gets c({\bf x}_i)~\forall i=0,\ldots,m-1$
}
\end{algorithm}

We also make the following
\begin{remark}\label{prp:time_invariance}
Consider a receding horizon set-up with time-invariant constraints and where, at each $k$, the optimization problem solved in the receding horizon window remains the same. Further, note that Algorithm \ref{alg:discrete_fpd}, by construction, finds all the possible (for each state) optimal conditional probability functions $\tilde{P}_U^0$. This implies that, in principle, for this special case, one can find the policy off-line. Specifically, once the problem is solved off-line, the control input can be simply obtained by sampling, at each $k$, from $\tilde{P}_U^0$.
\end{remark}
The rationale behind Remark \ref{prp:time_invariance} is as follows.
Let us assume that $t_0$ is the current time and Problem \ref{prb:global_opt} is solved via Algorithm \ref{alg:discrete_fpd}. In this case, $t=t_0 + k \Delta t$, so that a finite horizon $T$ of $n$ optimization steps is considered. Although the current state ${\bf x}(t_0) = {\bf x}(k = 0)$ is known, Algorithm \ref{alg:discrete_fpd} solves Problem \ref{prb:global_opt} $\forall {\bf x}(t_0) \in \mathcal{X}$. Let us name this optimal solution $P({\bf u}(k)|{\bf x}(k))$. In a receding horizon setup, $P({\bf u}(0)|{\bf x}(0))$ is the control action at $t=t_0$. Now, let us consider a generic time $t_1 > t_0$ at which Problem \ref{prb:global_opt} is solved again with the same strategy, with $t=t_1 + k \Delta t$ and $k=0,\ldots,n-1$ corresponding to the same finite horizon $T$ and with ${\bf x}(t_1) \in \mathcal{X}$. Let us name this optimal solution $P_{t_1}({\bf u}(k)|{\bf x}(k))$. If the probabilistic models and the constraints are time-invariant, then $P_{t_1}({\bf u}(k)|{\bf x}(k)) = P({\bf u}(k)|{\bf x}(k)) \, \forall k$. As a consequence, in our receding horizon setup, the control action at $t=t_1$ is, again, $P({\bf u}(0)|{\bf x}(0))$.
Last, let us assume to pick $t_1 = t_i + i \Delta t$, with $i \in \mathbb{N}_0$ and that a control action is taken at every $t_1$. In view of the above, the optimal control action is $P({\bf u}(0)|{\bf x}(0))\,\forall t_1$, as long as ${\bf x}(t_1) = {\bf x}(k=0) \in \mathcal{X}$, that is true by construction.

\begin{remark}
Depending on applications, the control action/state space dimensions of the  problem can be reduced. For example, in the context of connected cars, as shown in \cite{https://doi.org/10.48550/arxiv.2212.02467} the state space can be conveniently reduced to only the links/states that are reachable by the vehicle within the time horizon. We leave for future research the problem of finding applications-agnostic methods to reduce the dimensionality of the control action/state space suitable for Algorithm \ref{alg:discrete_fpd}.
% The downside of the approach in Remark \ref{prp:time_invariance} is that, if the state space is large, the off-line planning approach is infeasible in any practical situation. In this case, a receding horizon strategy might still be the most effective option, possibly with a set constraint like ${\bf x}(0) \in \mathcal{X}_0$, where $\mathcal{X}_0$ is an arbitrarily large neighborhood of the current state ${\bf x}_0$. This would require re-executing the receding horizon optimization only when ${\bf x}(0)$ gets far enough from ${\bf x}_0$.
\end{remark}

\section{The Proposed Design Pipeline: From Data to Control Inputs}
\label{sec:pipeline}

Before proceeding with the illustration of the pipeline to compute the inputs required by Algorithm \ref{alg:discrete_fpd}, we note here that, as for any other control approach that relies solely on the available data, these need to be sufficiently {informative}. In this paper we do not consider the problem of obtaining sufficiently informative datasets and refer to e.g.\ \cite{8960476,8933093,9062331,COLIN2020109000} for recent results on this topic.

The first step of the pipeline is to gather the data, which are then processed to obtain the probability functions required as input by Algorithm \ref{alg:discrete_fpd}. The last, optional, step of the pipeline consists in formalizing the constraints for Problem \ref{prb:global_opt}.

\subsection{Data gathering}

We refer to the system under control as \emph{target system} and we term as {\em reference system} the one that is used to collect example data. In what follows, we do not require any knowledge on the (possibly) nonlinear and stochastic dynamics that is generating the data. Also, data for the target and reference system might be generated from different unknown dynamics. Our starting point is the collection of the data to compute $\tilde{Q}_X^k$, $\tilde{P}_X^k$ and $\tilde{Q}_U^k$. Namely, we assume the availability of the following data recorded within the observation window $[0,T]$: the triplets $\lbrace {\bf x}^t(k), {\bf u}^t(k), {\bf x}^t(k+1)\rbrace$ and $\lbrace {\bf x}^r(k), {\bf u}^r(k), {\bf x}^r(k+1)\rbrace$, and the pairs $\lbrace {\bf x}^r(k), {\bf u}^r(k) \rbrace$, where ${\bf x}^t(k)$ and ${\bf x}^r(k)$ are the target and reference systems' states and ${\bf u}^t(k)$ and ${\bf u}^r(k)$ are the target and reference systems' inputs at $t(k)$, respectively. Before illustrating how these data are used, we remark here that for physical applications (modeled via continuous dynamics) the data give a view of a discretized version of the processes. Hence, the probability functions that we obtain implicitly depend on the (discretization) step $\Delta t$. From the practical viewpoint, this parameter needs to be in accordance with the duration of the full control cycle.

\begin{remark}
For time-invariant, we drop the superscripts in the above probability functions as these are stationary. Due to the effects of the discretization and quantization, stationary probability functions might, in principle, still be obtained from time-varying dynamical processes.
\end{remark}

\subsection{Quantization}

Once the data are obtained, these need to be quantized to obtain discrete sets of the form \eqref{sets} required by Algorithm \ref{alg:discrete_fpd}. We let  ${\bf \Delta x}$ and ${\bf \Delta u}$ be the uniform quantization steps for state and input. Then, each given observation of the state, say ${\bf x}$, is mapped onto ${\bf x}_j$ defined in \eqref{sets} if 
\begin{equation} \label{discrete_state_interval}
{\bf x}_j - \frac{\bf \Delta x}{2} \leq {\bf x} < {\bf x}_j + \frac{\bf \Delta x}{2}.
\end{equation}
Analogously, a given observed input, say ${\bf u}$, is mapped onto ${\bf u}_h$ defined in \eqref{sets} if
\begin{equation} \label{discrete_input_interval}
{\bf u}_h - \frac{\bf \Delta u}{2} \leq {\bf u} < {\bf u}_h + \frac{\bf \Delta u}{2}.
\end{equation}
At the boundaries, ${\bf x}$ is mapped onto ${\bf x}_0$ when ${\bf x} < {\bf x}_0 + \Delta {\bf x}/2$ and onto ${\bf x}_{m-1}$ when ${\bf x} \geq {\bf x}_{m-1} - \Delta {\bf x}/{2}$. Equivalent considerations hold for the inputs.

\begin{remark}
An interesting open problem, which we leave for future research, is to characterize how control performance are affected by discretization (see also our conclusions).
%While it is reasonable to assume that the reference and target systems share a common state space (i.e.\ the two systems should have similar capabilities in order to deliver similar behaviors), inputs can be quite different, depending on the systems' dynamical characteristics. Thus input data might need to be normalized prior to the application of \eqref{discrete_input_interval}.
\end{remark}

%In \eqref{state_set} and \eqref{input_set}, the state and the input domains have been discretized. While it is reasonable to assume that the reference and target systems will share a common state space (i.e.\ the two systems should have similar capabilities in order to deliver similar behaviors), inputs can be quite different, depending on the systems' dynamical characteristics. Together with the input domain limits, resolution also plays an important role, as it affects the time and space complexity of the discrete FPD optimization. Even in terms of granularity of the available control inputs, the reference and target systems have, in general, different requirements. Depending on the specific characteristics of the systems at hand, normalization techniques can be designed to map the systems' input domains to a common domain.

%Said that the discrete domains should be designed in such a way that no state and no input can be found in the data that goes beyond the domain limits, in some circumstances, this can be a design choice. In these cases, it is worth defining different quantization conditions than \eqref{discrete_state_interval} and \eqref{discrete_input_interval}: a given observation ${\bf x}$ is mapped onto ${\bf x}_0$ when
%\begin{equation}
%{\bf x} < {\bf x}_0 + \frac{\Delta {\bf x}}{2}
%\end{equation}
%and onto ${\bf x}_{m-1}$ when
%\begin{equation}
%{\bf x} \geq {\bf x}_{m-1} - \frac{\Delta {\bf x}}{2}.
%\end{equation}
%Equivalent considerations hold for the inputs.

\subsection{Computation of the probability functions}

As in e.g.\ \cite{Gagliardi}, we obtain the probability functions by computing the empirical distributions from the data. This can be achieved by defining three counting functions: (i) $c_{X}:\mathcal{X} \rightarrow \mathbb{N}$ counts the occurrences of a given state in the collected reference (or target) system data; (ii) ${\bf c}_{X|X,U}:\mathcal{X} \times \mathcal{U} \rightarrow \mathbb{N}^m$, for each visited state and selected input in the same state, counts the occurrences of every state in the following time instant in the collected reference (or target) system data; (iii) ${\bf c}_{U|X}:\mathcal{X} \rightarrow \mathbb{N}^z$, for each visited state, counts the occurrences of every selected input in the collected reference system data. The probabilistic functions for the target thus be computed, for each $i$ and $h$, as:
\begin{equation}\label{sem_computation_translated}
\tilde{Q}_X({\bf x} | {\bf x}_i, {\bf u}_h)  = \frac{{\bm \tau}_{X|X,U}({\bf x}_i, {\bf u}_h)}{{\bm \tau}_{U|X}^h({\bf x}_i)}, \ \
\tilde{Q}_U({\bf u} | {\bf x}_i)  = \frac{{\bm \tau}_{U|X}({\bf x}_i)}{\tau_X({\bf x}_i)}.
\end{equation}
Since the counting functions are vector functions, the superscript indicates the element in the vector.
In the above expressions, we have $\tau_{X}({\bf x}) := o_s + c_{X}({\bf x}), {\bm \tau}_{X|X,U}({\bf x}, {\bf u}) := o_n + {\bf c}_{X|X,U}({\bf x}, {\bf u})$ and ${\bm \tau}_{U|X}({\bf x}) := o_i + {\bf c}_{U|X}({\bf x})$, with $o_s$, $o_n$ and $o_i$ being small constants (offsets) added in order to avoid divisions by $0$ in \eqref{sem_computation_translated} which would happen when a given event is not seen in the data, as well as the computation of $\log{(0)}$ for, e.g., the coefficients in Problem \ref{prb:local_opt}. Clearly, the same steps can be followed to obtain $\tilde{P}_X({\bf x} | {\bf x}_i, {\bf u}_h)$ and these are omitted here for brevity.
\begin{remark}
A possible choice for the  offsets $o_s$ is to set $o_s \le 1/m$. Once $o_s$ is fixed, all the other offsets must be designed so as to guarantee that probabilities sum to one. Then one needs to set $o_i = o_s/z$ and $o_n = o_i/m$.
\end{remark}
\begin{remark}
When an input is never selected in a given state or a state is never visited, the data do not contain any information about $\tilde{Q}_X({\bf x} | {\bf x}_i, {\bf u}_h)$ and $\tilde{Q}_U({\bf u} | {\bf x}_i)$ respectively. In these cases the best we can do is assuming that any state can be reached at the next time step with equal probability by applying ${\bf u}_h$ in ${\bf x}_i$ and that any input is selected with equal probability when in ${\bf x}_i$, respectively. Equations in \eqref{sem_computation_translated} automatically implement this condition.
\end{remark}

\subsection{Respecting the range of feasible inputs}
 
Problem \eqref{prb:local_opt} accounts for soft and hard constraints imposed through probabilities. Here we discuss about the possibility of manipulating the input domains so as to respect the range of feasible inputs, regardless of the optimization performed by Algorithm \ref{alg:discrete_fpd}. Let us assume that the reference and target systems are commanded with inputs ${\bf v}_{r,t}$, and that the (symmetrical) ranges of feasible inputs are $-\overline{\bf v}_{r,t} \leq {\bf v}_{r,t} \leq \overline{\bf v}_{r,t}$, where $\pm \overline{\bf v}_{r,t}$ represent the boundaries, extracted, for instance, from the plants' datasheet. The inputs of the probabilistic model are the normalized ones, i.e.\
\begin{equation} \label{normalization}
{\bf u}_r = \frac{{\bf v}_r}{\overline{\bf v}_r}, {\bf u}_t = \frac{{\bf v}_t}{\overline{\bf v}_t}
\end{equation}
which implies $-{\bf 1} \leq {\bf u}_{\lbrace r,t \rbrace} \leq {\bf 1}$, with ${\bf 1}$ representing the vector of ones of appropriate size.

Once the control policy $\tilde{P}_k^U$ is available, the probabilistic controller reconstructs the target system input by inverting \eqref{normalization}, hence setting ${\bf u}_0 = -{\bf 1}, {\bf u}_{z-1} = {\bf 1}$ in \eqref{sets} ensures that the ranges of feasible inputs are respected without defining explicit constraints in Problem \ref{prb:global_opt}.

\subsection{Adding constraints in the optimization problem}

This last step in the pipeline is optional and is only requested if the problem has actuation constraints, beyond respecting the range of feasible inputs. We recall that the constraint functions that the DFPD takes as input are generic and one only needs to verify that the set is convex (recall that convexity of the set guarantees the existence of an optimal solution to Problem \ref{prb:global_opt}). Constraints of practical interest, which can be captured with a proper choice of constraint functions, include moment constraints \cite{1246014,1618839,8359301} and bound constraints \cite{8651519,9290355}. For example, the inequality constraint $\E_{\tilde{P}^k_{U}}[U^i] - m_i \le 0$ expresses the fact that the $i$-th moment of the control input is not larger than $m_i$. Bound constraints instead formalize the fact that $\mathbb{P}(\bv{U}_k\in\bar{\mathcal{U}}_k)\ge 1-\varepsilon$,  where $\bar{\mathcal{U}}_k \subset \mathcal{U}$ is and $\varepsilon \ge 0$. That is, the constraint captures the fact that the control variable belongs to some (e.g.\ desired) $\bar{\mathcal{U}}_k$ with some desired probability. This constraint can be included in Algorithm \ref{alg:discrete_fpd} as $\E_{\tilde{f}_\bv{U}^k}\left[\mathds{1}_{\bar{\mathcal{U}}_k}(\mathbf{U}_k)\right]$.

\section{Benchmarking the DFPD  on an Inverted Pendulum}\label{sec:numerical_results}

We now numerically investigate the effectivenes of the DFPD  by using an inverted pendulum with actuation constraints as test-bed. We first describe the environment, then we describe how the inputs to Algorithm \ref{alg:discrete_fpd} are obtained. We finally discuss the numerical results. The fully documented code to replicate the results can be found at \url{https://github.com/unisa-acg/discrete-fpd}.

\subsection{The environment}

We consider the task of stabilizing a pendulum on its unstable equilibrium. The pendulum we want to stabilize (i.e.\ the target system) has parameters that are different from the one used to collect the data (i.e.\ the reference system). Specifically, the parameters of the reference system were as follows: rod length, $l_r=0.2$ m, mass, $m_r=0.5$ kg, friction $b_r = 8 \cdot 10^{-5}$ Nm/(deg/s). Instead, for the target pendulum we have $l_t=0.4$ m, $m_t=1$ kg, $b_t = 1 \cdot 10^{-5}$ Nm/(deg/s). As usual, the input to the pendulum is the torque applied to its hinged end, i.e.\ ${\bf u}=u$, while the state is ${\bf x} = \left[ x_1, x_2 \right]^T$, with  $x_1$ being the angular position and $x_2 = \dot{x}_1$ the angular velocity. Finally, in one of our simulations, we impose the constraint that the target system cannot use a torque that is larger than $50\%$ of the torque capacity, say $\tau_{t,max} = 11.5$ Nm, i.e.\ $-0.5 \leq {\bf u} \leq 0.5$.

\subsection{Obtaining the inputs to Algorithm \ref{alg:discrete_fpd}.}

Data were obtained by simulating the nonlinear dynamics of the target and the reference system.\footnote{Mathematical models are only used to generate the data and are not leveraged to compute the control action} The dynamics of the two systems were stochastic: zero mean Gaussian noise processes with variances $\sigma_r^2=20$ and $\sigma_t^2=10$ were additively added to the acceleration of the reference and target dynamics, respectively. Data were generated by making the reference system follow a path that takes it from the stable equilibrium state ${\bf x} = \left[ -\pi/2, 0 \right]^T$ to the unstable equilibrium state ${\bf x} = \left[ \pi/2, 0 \right]^T$. Examples were generated via a model-based PID controller, simulating a number of control policies that differed in the time law used by the pendulum to reach the target position. The time parametrization was obtained through the phase plane technique, see e.g. \cite{Slotine}. We used this technique as it allowed us to formulate the trajectory generation problem for the pendulum with a reduced set of parameters. In fact all the trajectories belong to one of two classes having either one or three characteristic switching points (simply switching points in what follows -- see \cite{Slotine} for the precise definition) in the phase plane, that could be suitably assigned through randomized uniform sampling in a given range. The switching points were then connected through linear interpolation. Figure \ref{fig:reference_data_single} and Figure \ref{fig:reference_data} provide two different views of reference system's position, velocity and torque signals over 100 simulations, where $30\%$ of trajectories belong to the 1-switching-point class, while the remaining ones belong to the 3-switching-point class. In order to collect data on the state evolution model of the target pendulum, we provided it with open loop torques that excited the system at different configurations. All simulations for the reference and target systems were executed with $\Delta t = 0.01$ s.

\begin{figure}
    \centering   
    \includegraphics[width=0.32\textwidth]{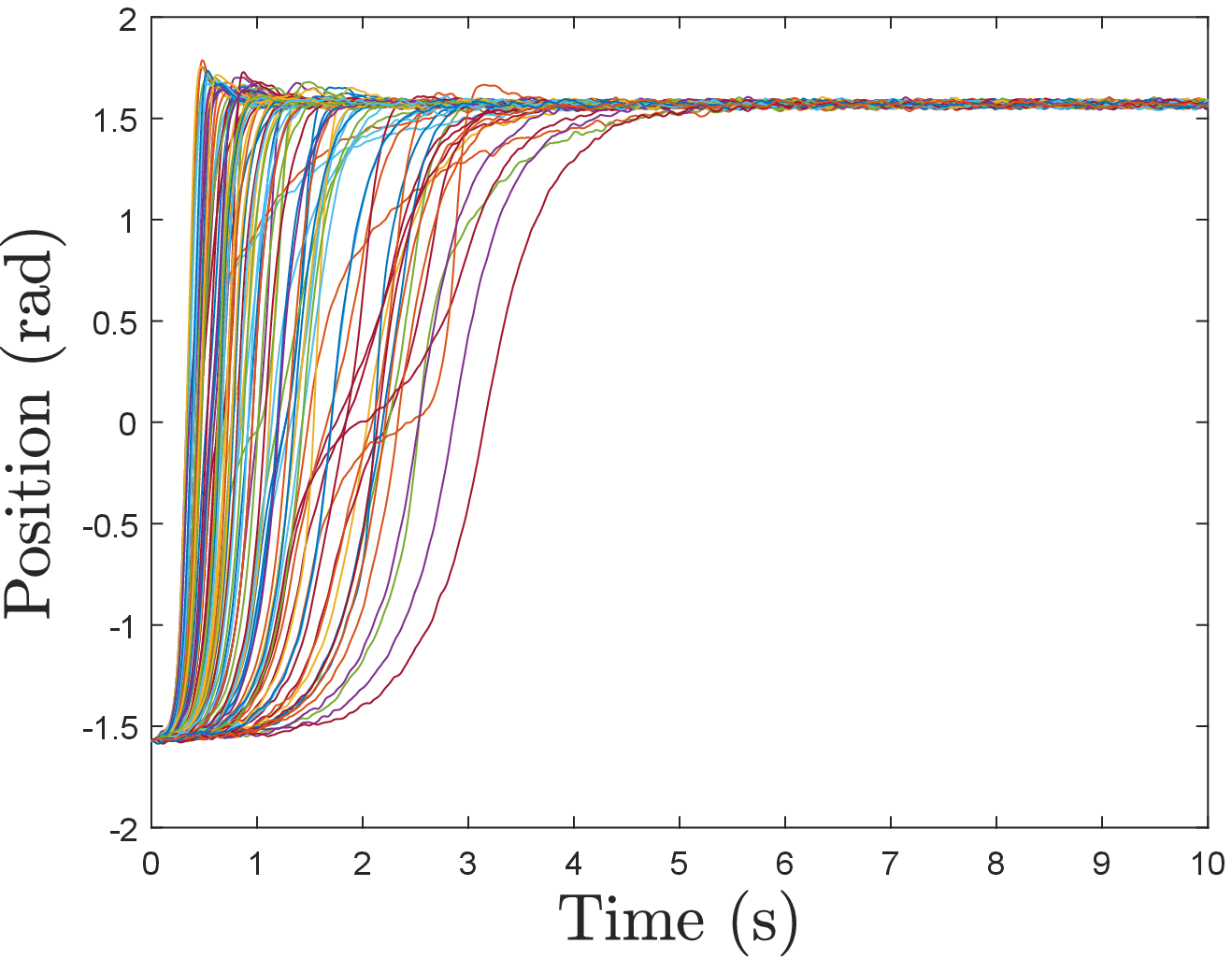}
    \includegraphics[width=0.32\textwidth]{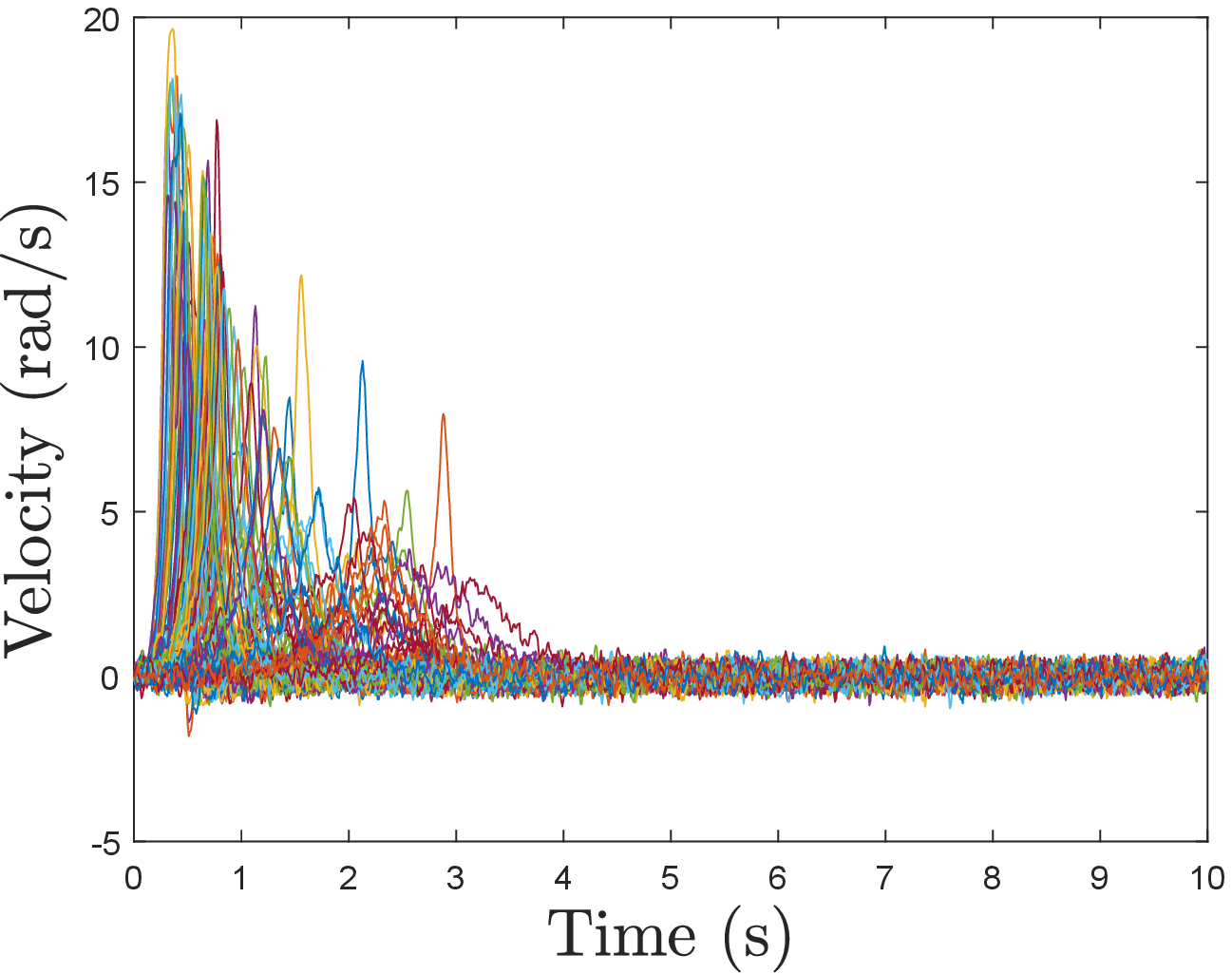}
    \includegraphics[width=0.32\textwidth]{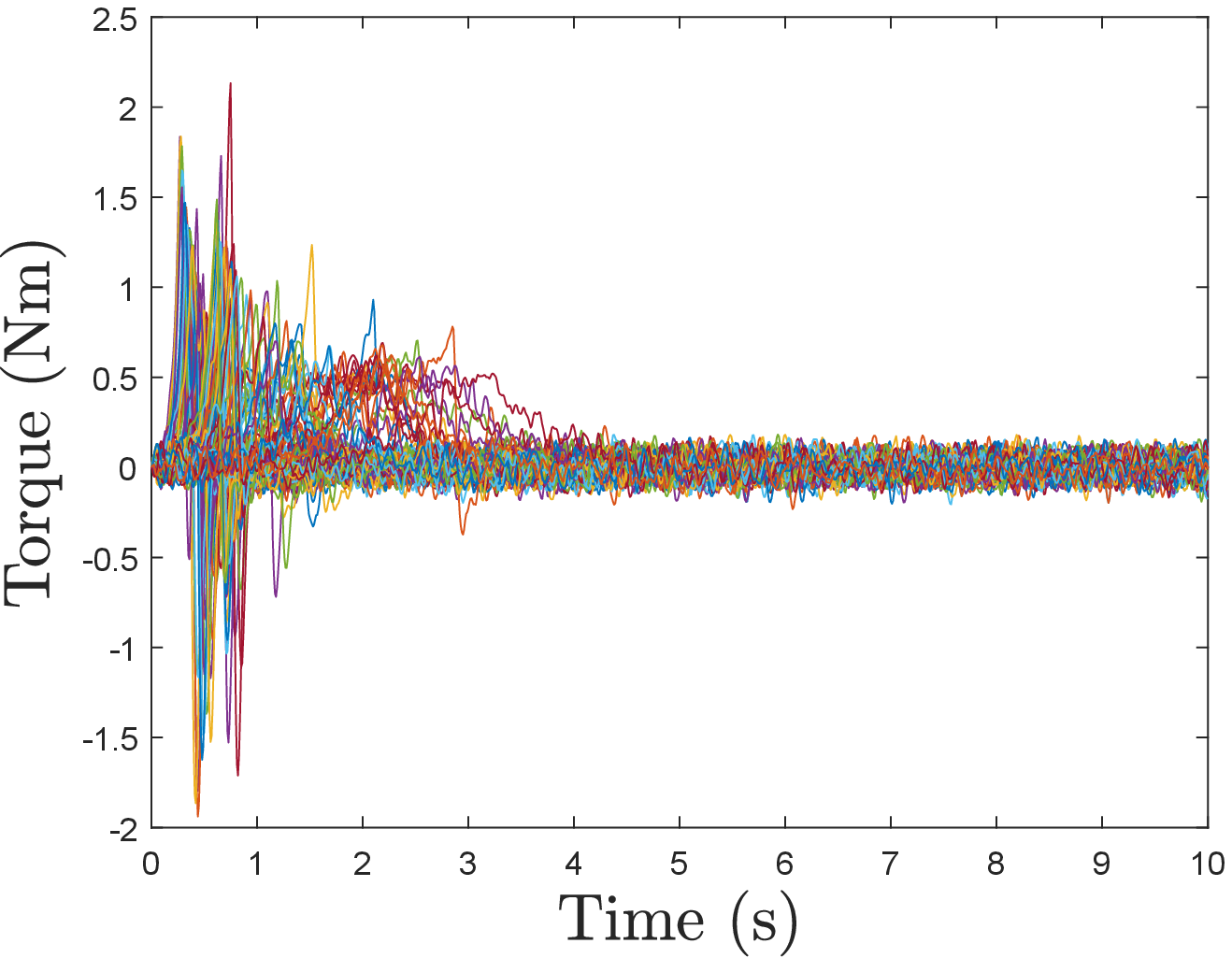}
    \caption{Reference system data (single trajectories) over 100 simulations.}   
	\label{fig:reference_data_single}
\end{figure}

\begin{figure}
    \centering   
    \includegraphics[width=0.33\textwidth]{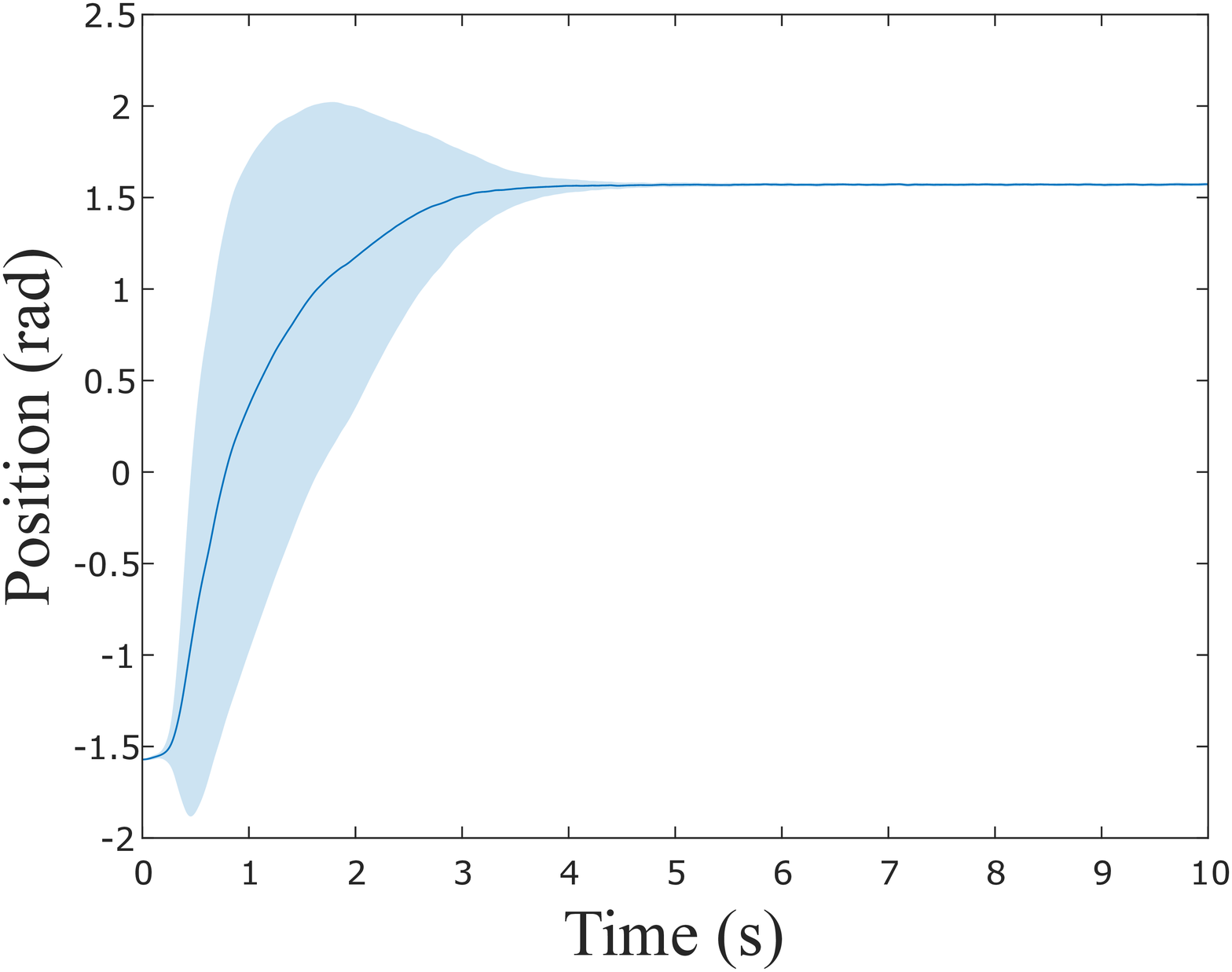}
    \includegraphics[width=0.325\textwidth]{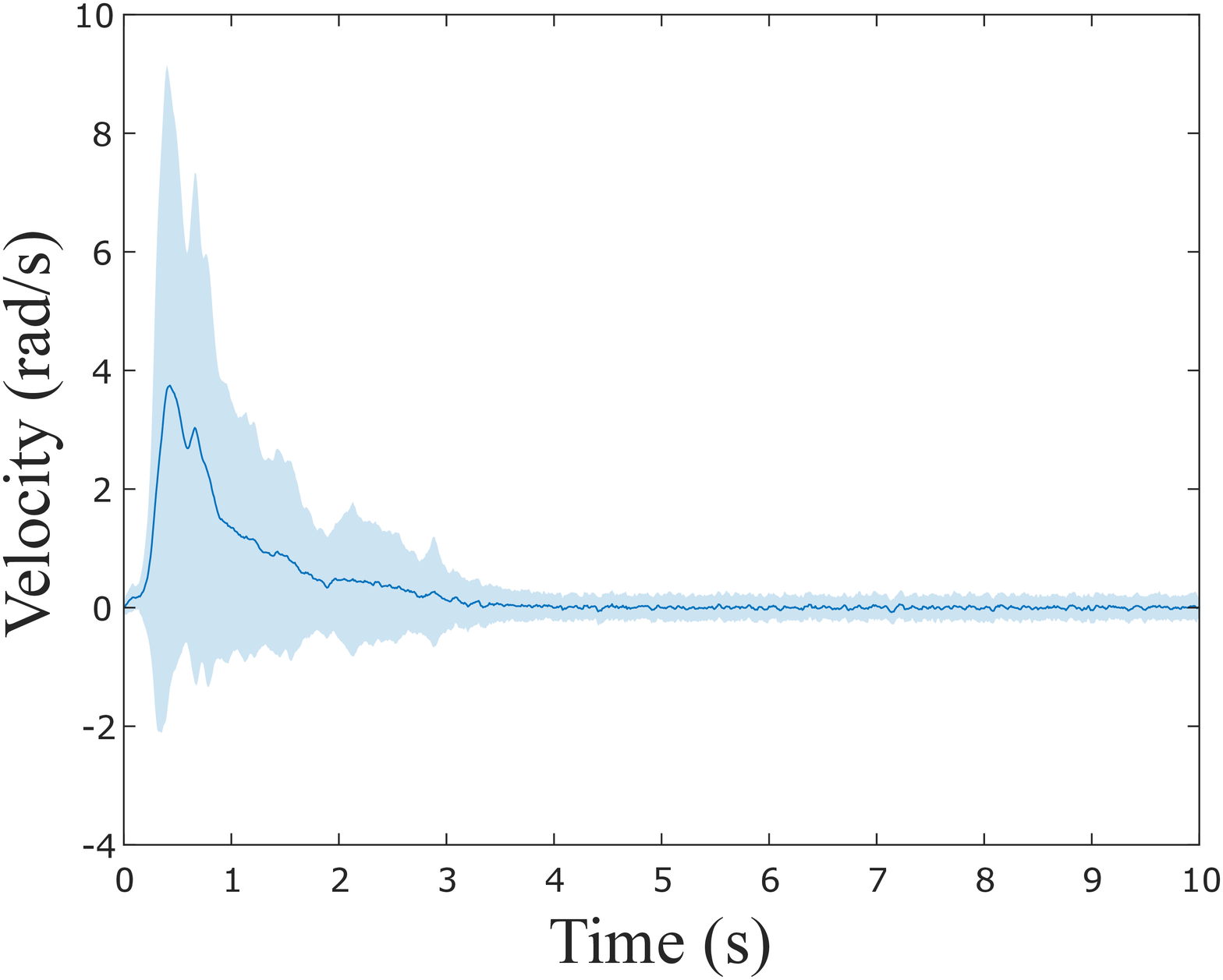}
    \includegraphics[width=0.33\textwidth]{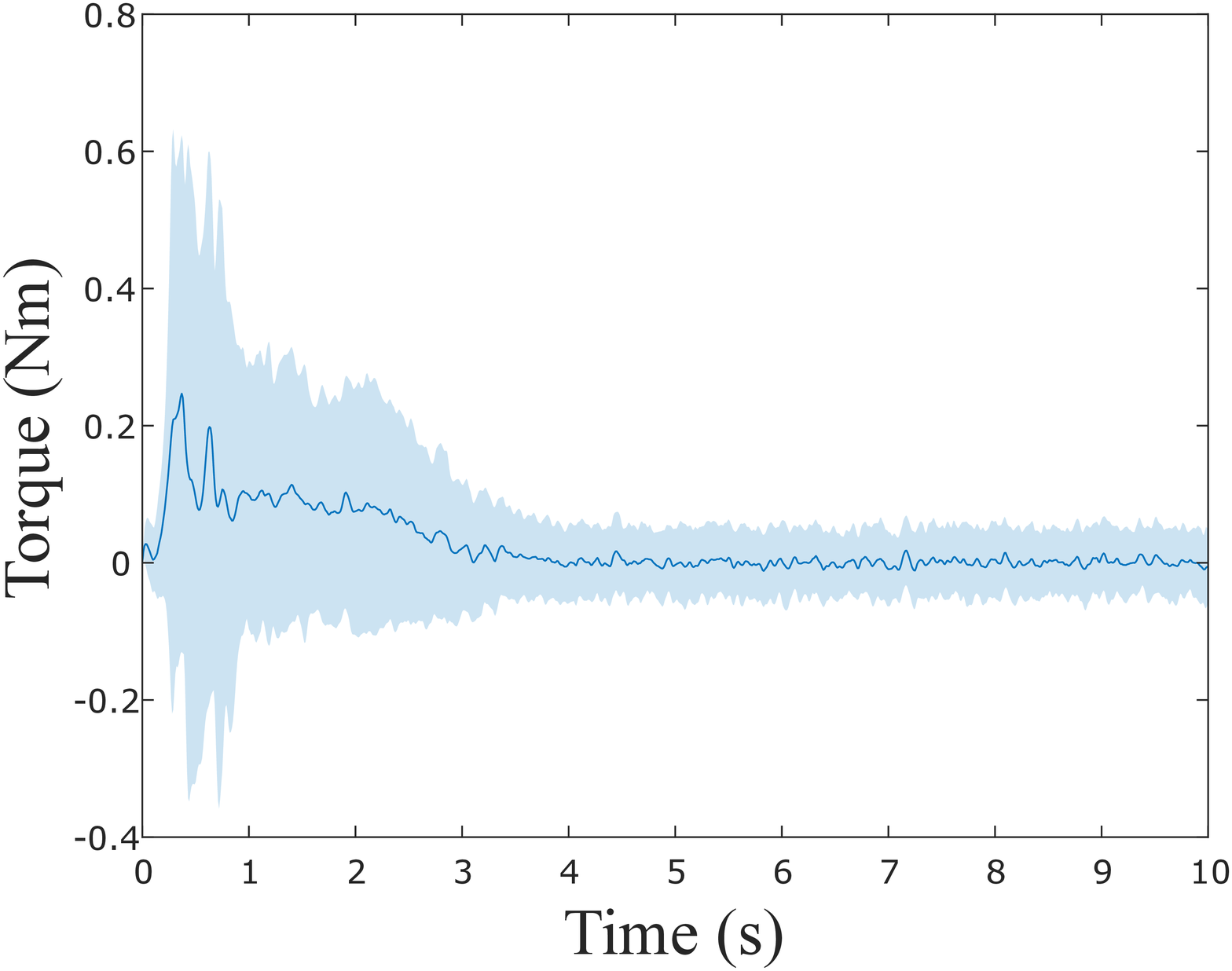}
    \caption{Reference system data over 100 simulations (same as Figure \ref{fig:reference_data_single}); bold lines denote means, shaded areas the confidence intervals corresponding to the standard deviation.}
	\label{fig:reference_data}
\end{figure}

The generation of the probabilistic model followed the arguments of Section \ref{sec:pipeline}. First, the state and input domains were discretized, then the probabilistic models were computed as in \eqref{sem_computation_translated}, assuming time invariance. Discrete state domain boundaries were determined on the basis of the maximum and minimum values observed in the reference and target data, thus ${\bf x}_0 = \left[-1.5872, -1.8107\right]^T$ and ${\bf x}_{m-1} = \left[1.9583, 19.6537\right]^T$. The state discretization step $\Delta {\bf x}$ was selected from a trade-off between accuracy and computation time. In our numerical experiments we set $\Delta {\bf x} = \left[0.1223, 0.7402\right]^T$, as this empirically yielded satisfactory results. For the inputs, we observed that the target system is bigger and heavier than the reference one, from which one should expect higher torques. In our implementation, we considered symmetric torque limits and inputs of the reference system were normalized as $u = \tau_r/\tau_{r,max}$, while those of the target system as $u = \tau_t/\tau_{t,max}$, so that $u$ takes the common semantics of an effort capacity. In these expressions $\tau_r$ and $\tau_t$ are the observed torques of the reference and target system respectively, $\tau_{r,max}$ and $\tau_{t,max}$ are the maximum torques of the reference and target system respectively, as observed from data. Finally we set $\Delta u = 0.0513$ and, in \eqref{sets}, $u_0 = -1$ and $u_{z-1} = 1$, so as to respect the ranges of feasible inputs by construction. Given this set-up, we computed the probability functions $\tilde{Q}_X$, $\tilde{Q}_U$ and $\tilde{P}_X$ needed by Algorithm \ref{alg:discrete_fpd}. For the last experiment, as last input to Algorithm \ref{alg:discrete_fpd}, we defined a constraint on the torque of the pendulum (see the first paragraph of this section). The constraint was captured via a bound constraint.

\subsection{Results}

We obtained the optimal policy to control the target system from data collected from the reference system via Algorithm \ref{alg:discrete_fpd}. In particular, by leveraging the stationarity of the system and of the constraints, we used the observation given in Remark \ref{prp:time_invariance} to retrieve the optimal policy. Namely, once the problem was solved off-line, we sampled, at each $k$, the control input from the conditional probability function $\tilde{P}^0_U$. This policy was the one used at run-time to generete the actual torque inputs to the pendulum. Specifically, at each control cycle: (i) the current state (pendulum position and velocity) is read; (ii) quantization is retrieved; (iii) the right histogram for $\tilde{P}^0_U({\bf u}|{\bf x}_j)$ is queried from the memory (note indeed that $\tilde{P}^0_U({\bf u}|{\bf x}_j)$ is conditioned and hence the histogram depends on the current state of the pendulum). From the histogram, the control input with the highest-probability value was selected and the torque was obtained as $\tau_t = u \cdot \tau_{t,max}$. The discrete FPD optimization was executed with $n=10$. The pendulum position, velocity and torque obtained with the FPD control policy are reported in blue in Figure \ref{fig:comparison} (colors online): the data driven control policy correctly stabilizes the pendulum around the state ${\bf x} = \left[ \pi/2, 0 \right]^T$, imitating the behavior of the reference system (instead shown in Figure \ref{fig:reference_data}). In Figure \ref{fig:comparison} we also report, in red, the result of using the same  policy on the reference system. The figure shows that, when the policy from the reference system is exported to control the target system, this is not able to stabilize the pendulum on the desired equilibrium. The reason for this is in the fact that the parameters of the two systems were different. In these simulations there were no constraints and our last validation step consists in verifying the ability of the DFPD to stabilize the pendulum even in the presence of the constraints mentioned in the first paragraph of this section. In Figure \ref{fig:data_driven_results_con} the behavior of the closed loop system is shown when these constraints were added. The figure clearly shows that DFPD was able to stabilize the pendulum in the presence of these constraints.

\begin{figure}
    \centering   
    \includegraphics[width=0.33\textwidth]{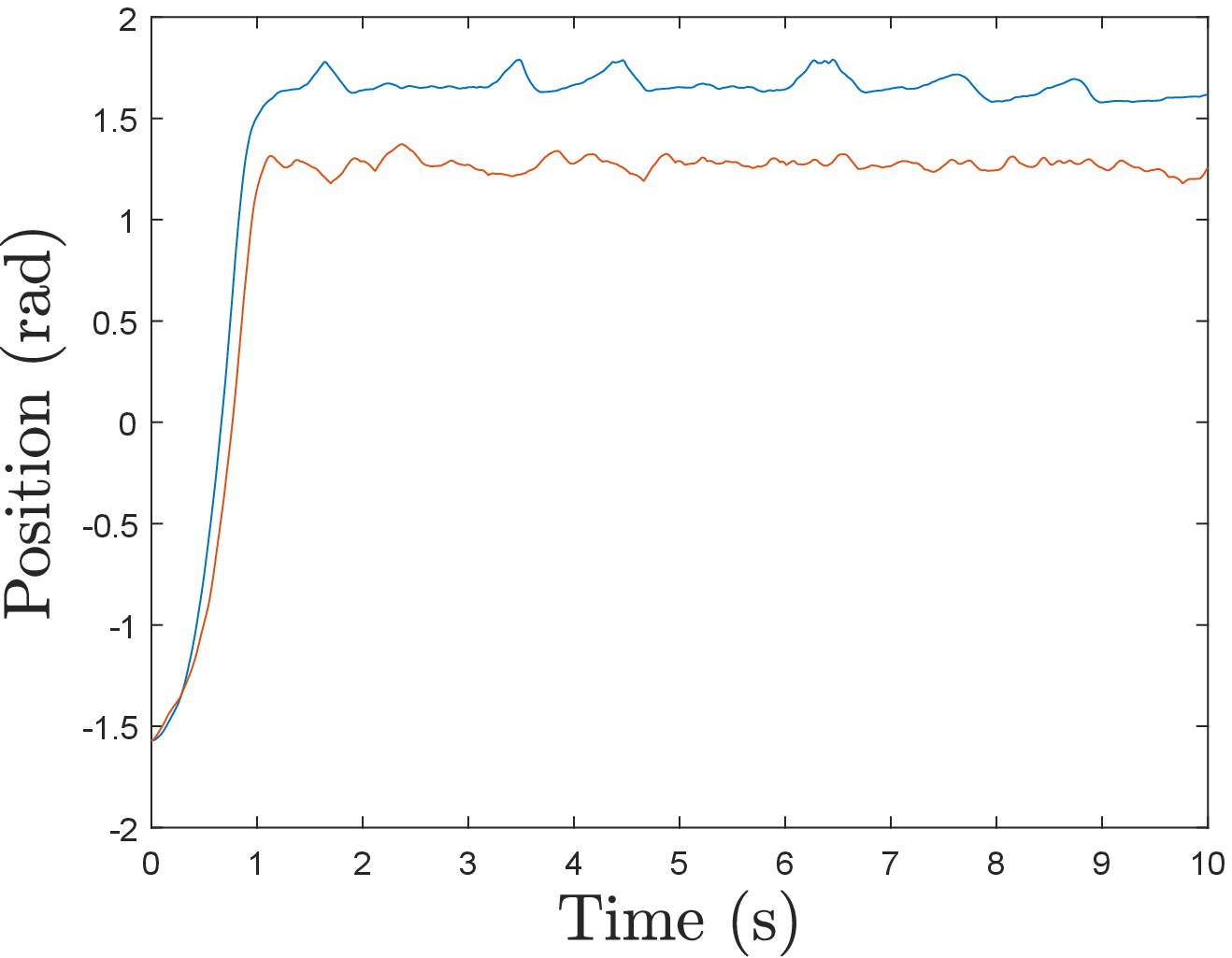}
    \includegraphics[width=0.32\textwidth]{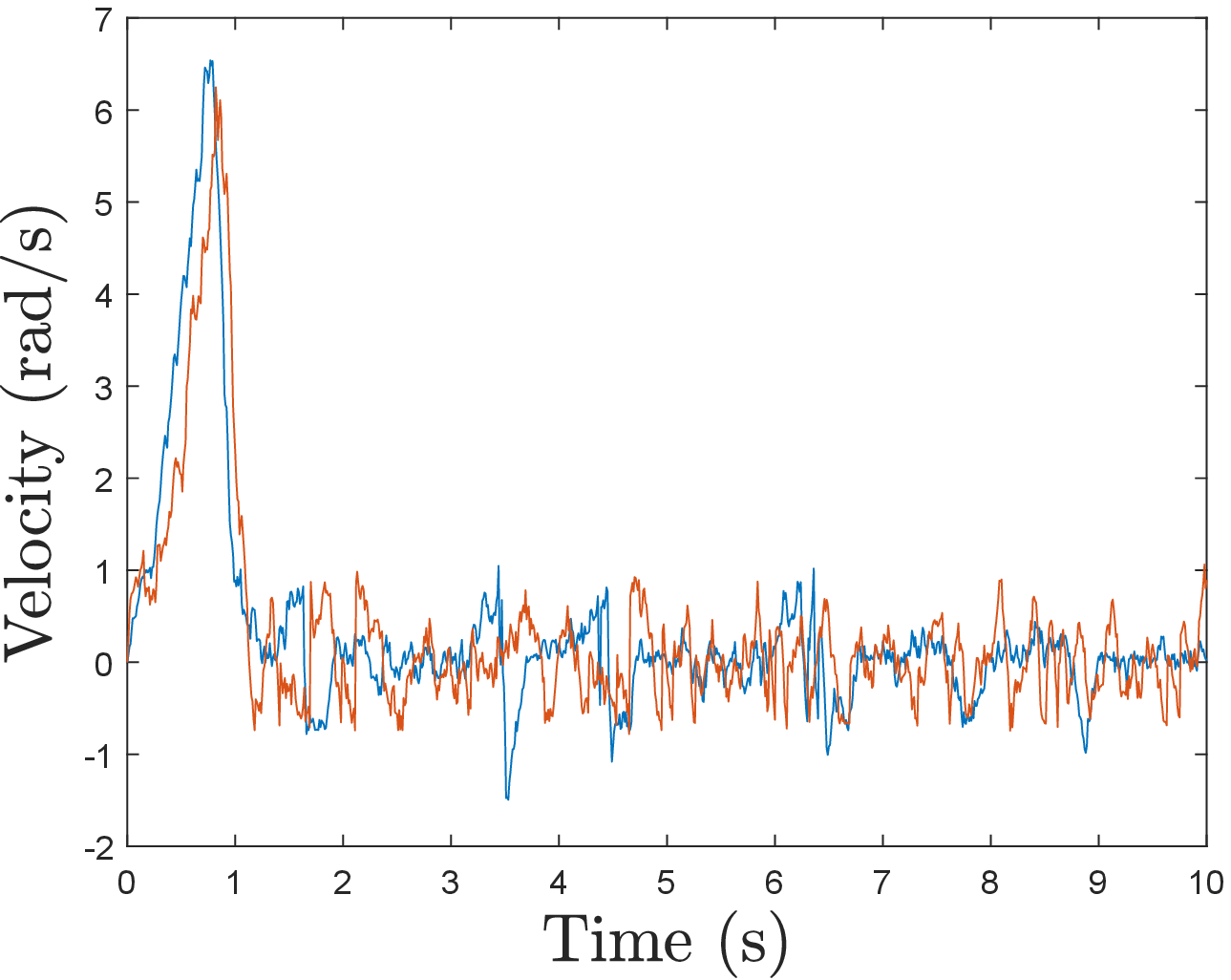}
    \includegraphics[width=0.32\textwidth]{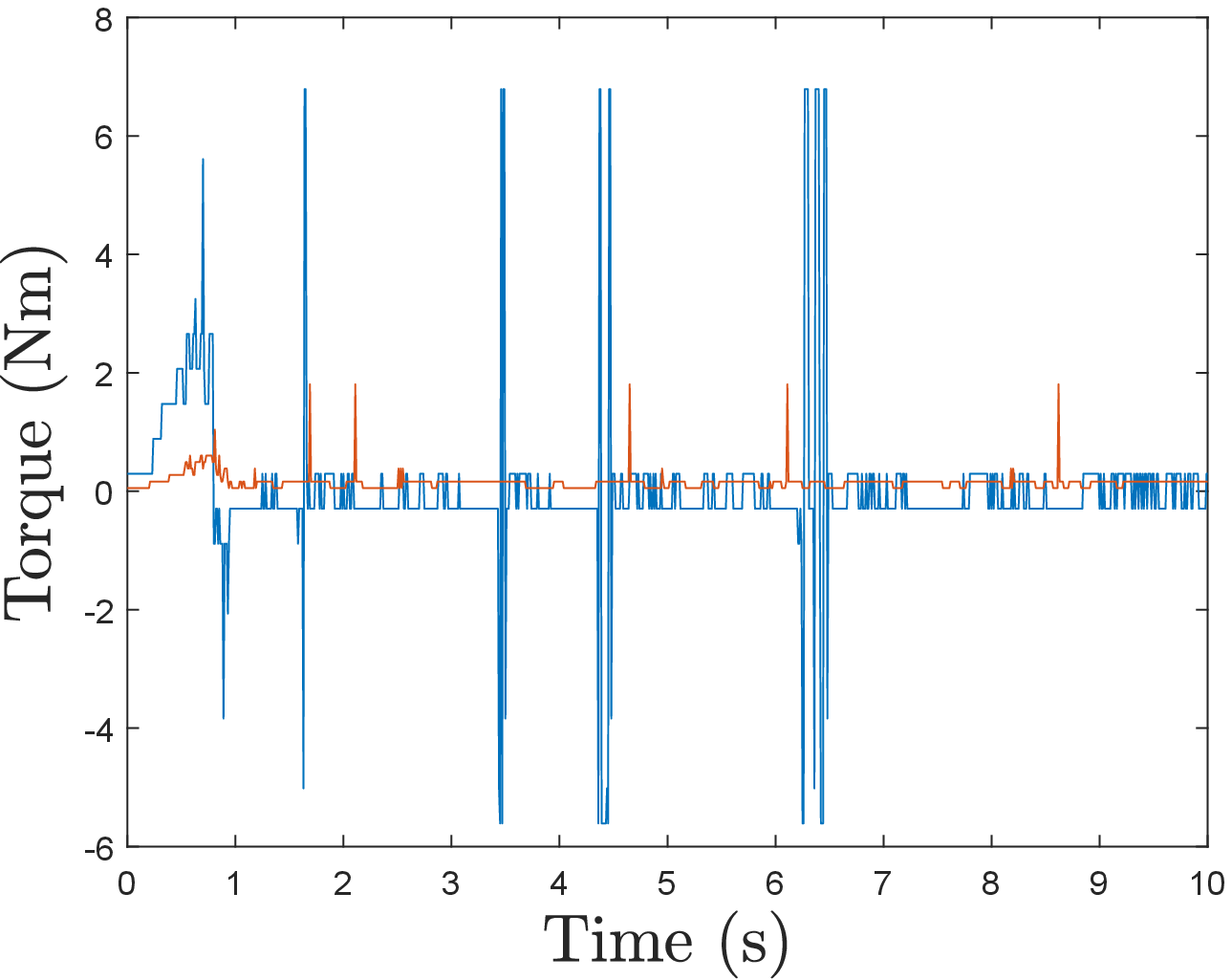}
    \caption{Time evolution of position, velocity and torque when the data-driven control policy is used to control the target (blue) and reference (red) system. Colors online.}   
	\label{fig:comparison}
\end{figure}

\begin{figure}[!htb]
    \centering   
    \includegraphics[width=0.33\textwidth]{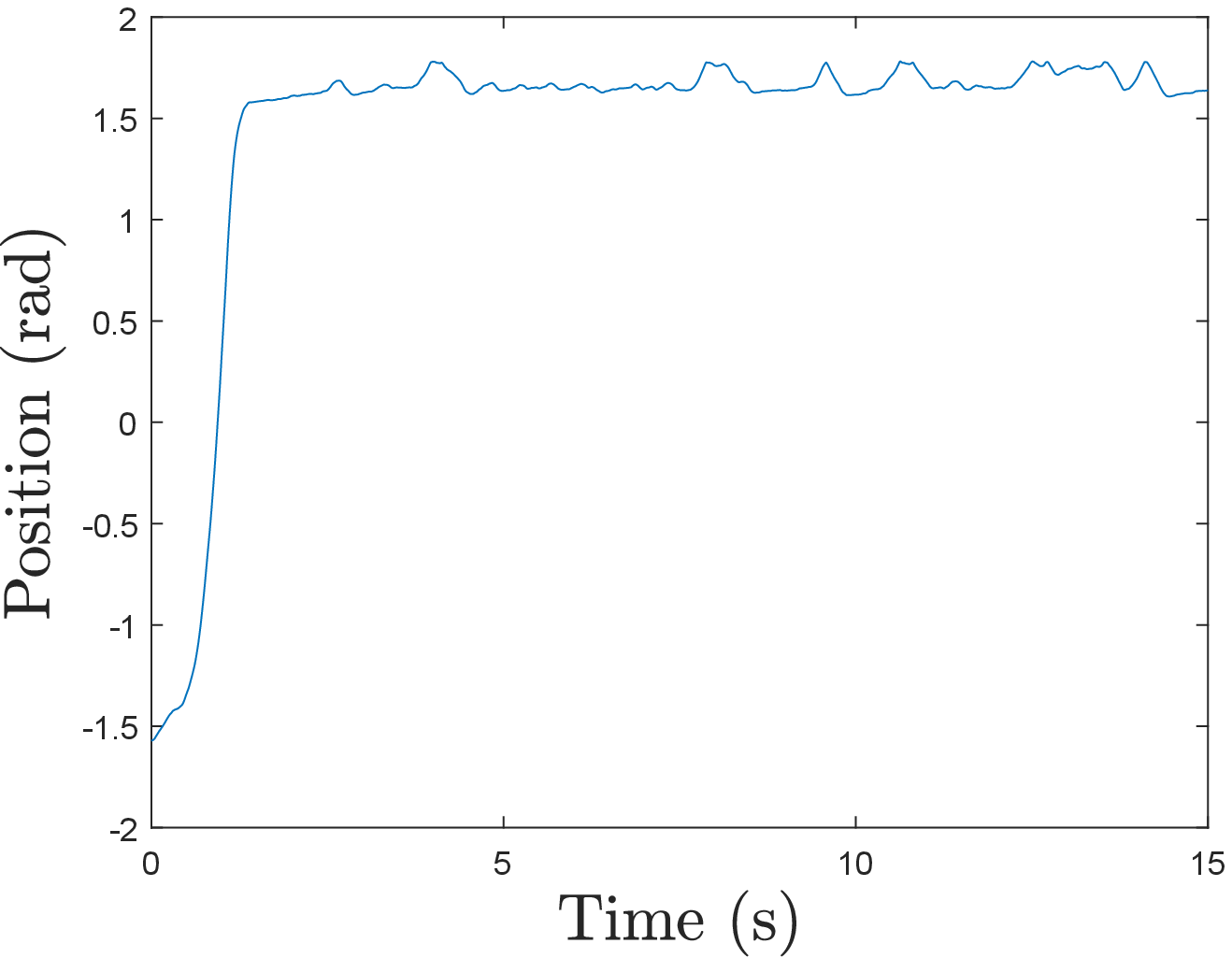}
    \includegraphics[width=0.32\textwidth]{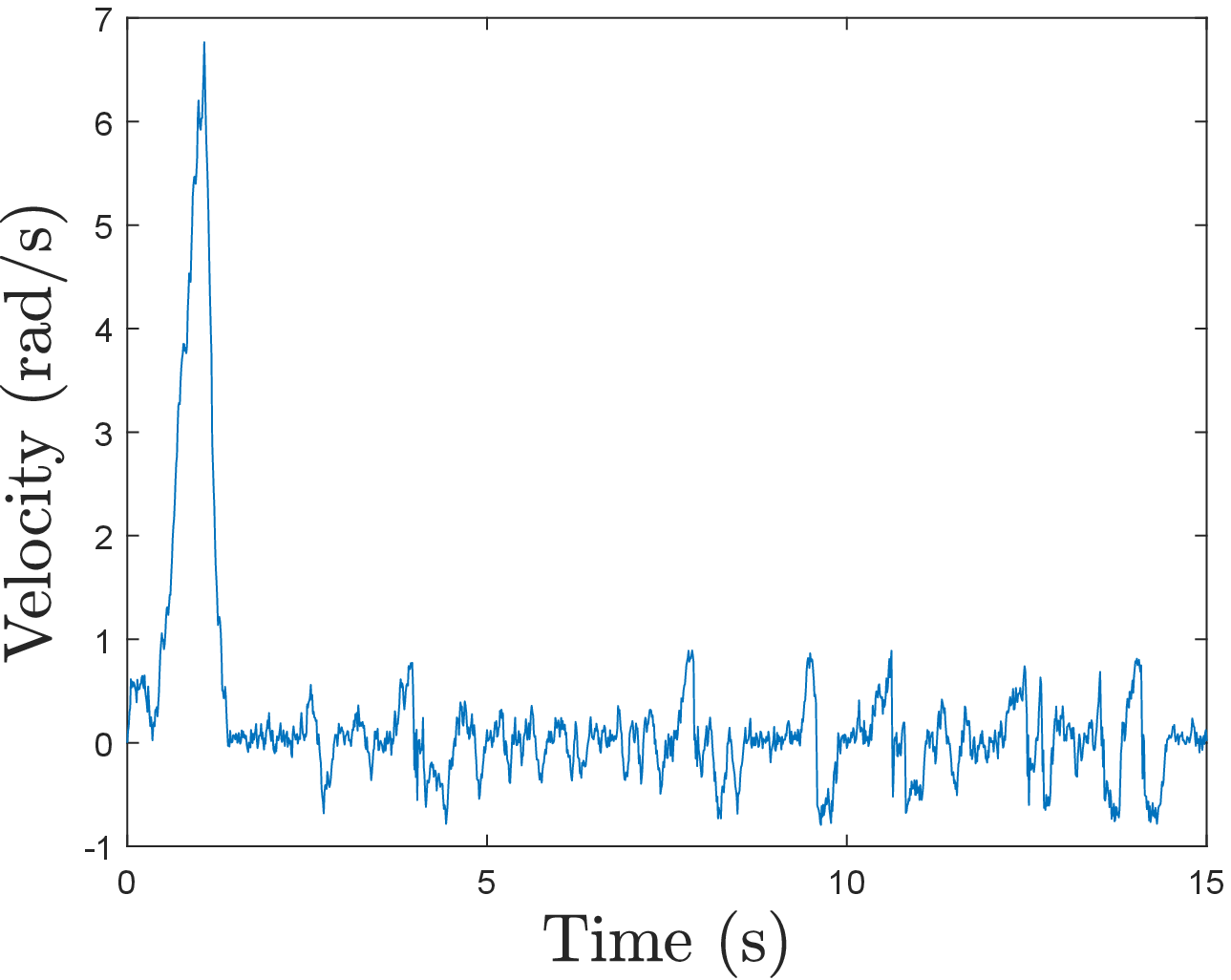}
    \includegraphics[width=0.32\textwidth]{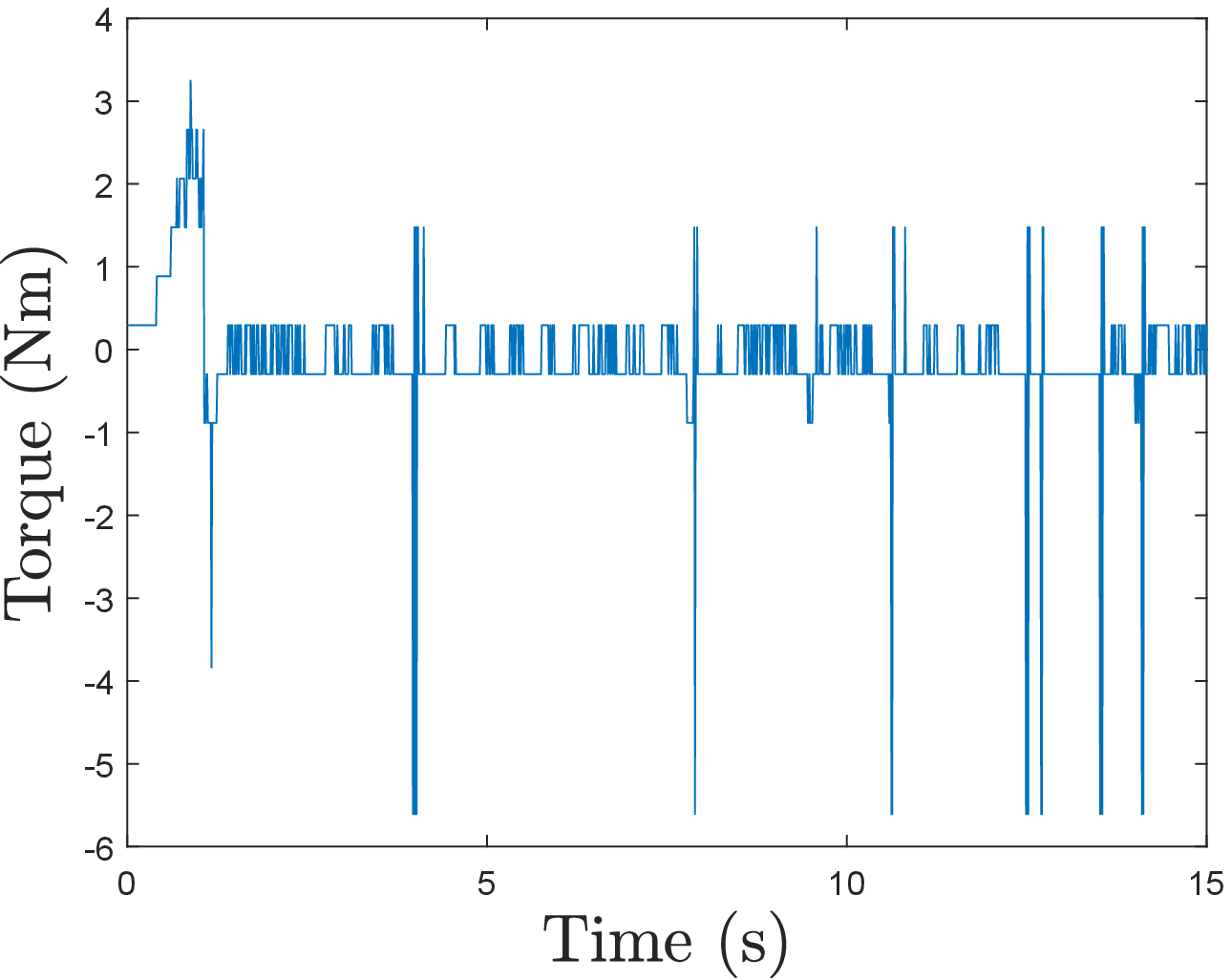}
    \caption{Target pendulum trajectory through data-driven control with restricted torque constraints}   
	\label{fig:data_driven_results_con}
\end{figure}

\section{Conclusions}\label{sec:conclusions}

Motivated by the problem of designing control policies from example data for constrained systems having a possibly stochastic and nonlinear dynamics, we introduced the principled design of a pipeline that enables control synthesis in these situations. The pipeline expounds an algorithm from \cite{Gagliardi} for the synthesis of control policies from data collected from a system that is different from the one under control, without requiring that the constraints are already  fulfilled in the possibly noisy example data. We benchmarked numerically the DFPD on an example that involves controlling an inverted pendulum. The pendulum was affected by actuation constraints and the demonstration data were collected from a physically different pendulum that does not satisfy these constraints. Our simulations highlighted that DFPD is able to stabilize the pendulum directly from the data while satisfying the constraints embedded in the problem formulation. The fully documented code implementing our design is also made openly available. 

\noindent We plan to build on the pipeline presented here to develop a fully end-to-end pipeline for control synthesis from demonstrations. In doing so, we also plan to characterize how discretization affects control performance and to devise control action/state dimensionality reduction techniques that can be useful to scale the approach we presented. Finally, we aim to deliver a large scale demonstrator of the end-to-end pipeline.

\appendix
\section{Sketch of the derivation of Problem \ref{prb:local_opt}}
\label{apx:derivation}

The derivations are adapted from \cite{Gagliardi} and we refer to such a paper for the rigorous arguments. Here we only focus on the cost in Problem \ref{prb:local_opt} as the constraints are obtained immediately from Problem \ref{prb:global_opt}. Following the chain rule for the KL-divergence, the cost of Problem \ref{prb:global_opt} can be rewritten 
\begin{equation} \label{opt_recursive}
\begin{split}
\min_{\tilde{P}_U^{0}, \ldots, \tilde{P}_U^{n-2}} \Bigg\{ & \mathcal{D}_{KL}(P^{n-1} || Q^{n-1}) + \\
& + \min_{\tilde{P}^{n-1}_U} \mathbb{E}_{P^{n-1}} \left[ \mathcal{D}_{KL}(\tilde{P}^{n} || \tilde{Q}^{n}) \right] \Bigg\}.
\end{split}
\end{equation}
The fact that $\mathcal{D}_{KL}(\tilde{P}^{n} || \tilde{Q}^{n})$ is only a function of ${\bf x}(n-1)$ implies that the expectation in the inner minimization can be taken over ${P \left( {\bf x}(n-1) \right) }$. Linearity of the expectation together with the fact that the decision variable of the inner optimization does not depend on $P \left( {\bf x}(n-1) \right)$, imply that \eqref{opt_recursive} can be recast as
\begin{equation} \label{opt_recursive_with_d}
\min_{\tilde{P}_U^{0}, \ldots, \tilde{P}_U^{n-2}} \left\{ \mathcal{D}_{KL}(P^{n-1} || Q^{n-1}) +   \mathbb{E}_{P \left( {\bf x}(n-1) \right)} \left[ d \big( {\bf x}(n-1) \big) \right] \right\}
\end{equation}
where $d \big( {\bf x}(n-1) \big) := \min_{\tilde{P}^{n-1}_U} \mathcal{D}_{KL}(\tilde{P}^{n} || \tilde{Q}^{n})$. The chain rule applied on $\mathcal{D}_{KL}(P^{n-1} || Q^{n-1})$ allows us to write \eqref{opt_recursive_with_d} as
\begin{equation}
\begin{split}
\min_{\tilde{P}_U^{0}, \ldots, \tilde{P}_U^{n-2}} \bigg\{ & \mathcal{D}_{KL}(P^{n-2} || Q^{n-2}) + \\
& + \mathbb{E}_{P^{n-2}} \left[ \mathcal{D}_{KL}(\tilde{P}^{n-1} || \tilde{Q}^{n-1}) \right] + \\
& + \mathbb{E}_{P\left({\bf x}(n-1)\right)} \left[ d \big( {\bf x}(n-1) \big) \right] \bigg\},
\end{split}
\end{equation}
that, following the same arguments outlined above, can be written as
\begin{equation}
\begin{split}
\min_{\tilde{P}_U^{0}, \ldots, \tilde{P}_U^{n-3}} \Bigg\{ & \mathcal{D}_{KL}(P^{n-2} || Q^{n-2}) + \\
& + \mathbb{E}_{P \left( {\bf x}(n-2) \right)} \bigg[ \min_{\tilde{P}_U^{n-2}} \Big\{ \mathcal{D}_{KL}(\tilde{P}^{n-1} || \tilde{Q}^{n-1}) + \\
& + \mathbb{E}_{P\left({\bf x}(n-1)\right)} \left[ d \big( {\bf x}(n-1) \big) \right] \Big\} \bigg] \Bigg\}.
\end{split}
\end{equation}
It can then be shown that, at each $k$, the inner optimization can be written as:
\begin{equation}
\begin{split}
d \left( {\bf x}(k) \right) := \min_{\tilde{P}_U^{k}} \Bigg\{ & \mathcal{D}_{KL}(\tilde{P}_U^{k} || \tilde{Q}_U^{k}) + \\
& + \sum_{{\bf u}(k)} \tilde{P}_U^{k} \mathcal{D}_{KL}(\tilde{P}_X^{k+1} || \tilde{Q}_X^{k+1} ) + \\
& + \mathbb{E}_{P\left({\bf x}(k+1)\right)} \left[ d \big( {\bf x}(k+1) \big) \right] \Bigg\}.
\end{split}
\end{equation}
We now introduce node indices, yielding the equivalent formulation
\begin{equation}
\begin{split}
d \left( {\bf x}(k) \right) = \min_{\tilde{P}_U^{k}} & \Bigg\{ \sum_{h=0}^{z-1} P\big({\bf u}_h(k)| {\bf x}_i(k) \big) \ln \frac{P\big({\bf u}_h(k)| {\bf x}_i(k) \big)}{Q\big({\bf u}_h(k)| {\bf x}_i(k) \big)} + \\
& + \sum_{h=0}^{z-1} P\big({\bf u}_h(k)| {\bf x}_i(k) \big) \mathcal{D}_{KL}(\tilde{P}_X^{k+1} || \tilde{Q}_X^{k+1} ) + \\
& + \sum_{h=0}^{z-1} P\big({\bf u}_h(k)| {\bf x}_i(k) \big) \\ 
& \sum_{j=0}^{m-1} P\big( {\bf x}_j | {\bf x}_i(k), {\bf u}_h(k) \big) d \big( {\bf x}_j(k+1) \big) \Bigg\},
\end{split}
\end{equation}
from which the cost function of Problem \ref{prb:local_opt} can be derived. It is easy to recognize that the optimization for the last time instant can be easily obtained by setting $d = 0$, from which $r_{hi} = 0$ follows.

\bibliographystyle{IEEEtran}
\bibliography{arxiv-dfpd}

\end{document}